\numberwithin{equation}{section}
\newenvironment{red}{\relax\color{red}}{\relax}
\newenvironment{blue}{\relax\color{blue}}{\hspace*{.5ex}\relax}
\newcommand{\ber}{\begin{red}}
\newcommand{\er}{\end{red}}
\newcommand{\beb}{\begin{blue}}
\newcommand{\eb}{\end{blue}}
\theoremstyle{plain}
\newtheorem{lemma}{Lemma}[section]
\newtheorem{proposition}[lemma]{Proposition}
\newtheorem{theorem}[lemma]{Theorem}
\theoremstyle{definition}
\newtheorem{remark}[lemma]{Remark}
\newlength{\mylength}
\title{Lax-Sato formulation of the Novikov-Veselov Hierarchy}
\author{Sylvain Carpentier}
\date{\today}
\begin{document}

\begin{abstract}We construct a hierarchy of pairwise commuting flows $d/dt_{i,n}$ indexed by $i \in \{1,2 \}$ and $n \in \mathbb{Z}_{\geq 0}$ on triples $(\mathcal{L}_1, \mathcal{L}_2, \mathcal{H})$ where $\partial_1$ and $\partial_2$ are two commuting derivations, $\partial_i \mathcal{L}_i$ is a self-adjoint pseudodifferential operator in $\partial_i$ and $\mathcal{H}$ is the formal Schr\"{o}dinger operator $\mathcal{H}=\partial_1 \partial_2 +u$. $\mathcal{L}_1, \mathcal{L}_2$ and $\mathcal{H}$ are coupled by the relations $\mathcal{H} \mathcal{L}_i+\mathcal{L}_i^* \mathcal{H}=0$. We show that the flows $d/dt_{1,n}+d/dt_{2,n}$ commute with the involution $(\mathcal{L}_1, \mathcal{L}_2, \mathcal{H}) \mapsto (\mathcal{L}_2, \mathcal{L}_1, \mathcal{H})$ and that the first equation of this reduced hierarchy is the Novikov-Veselov equation.
\end{abstract}

\maketitle

\section{Introduction}

The Novikov-Veselov equation (NV) at energy level $E \in \mathbb{R}$ is the $(2+1)$-dimensional evolution PDE
\begin{equation} \label{NVeq}
\frac{du}{dt}=\partial_z^3u+\partial_{\bar{z}}^3 u+\partial_z((u-E)v)+\partial_{\bar{z}}((u-E) \bar{v}), \hspace{1 cm}
3 \,\partial_z u =\partial_{\bar{z}} v,
\end{equation}
where $u(x,y,t)$ is a real-valued function, $v(x,y,t)$ is complex-valued, $\partial_z=\frac{1}{2}(\partial_x+i \partial_y)$ and $\partial_{\bar{z}}=\frac{1}{2}(\partial_x-i \partial_y)$. It was derived in [NV] as a compatibility condition between differential equations satisfied by a certain multiparametric generalized Baker-Akhiezer function. The discovery of the algebro-geometric data defining uniquely this generalized Baker-Akhiezer function was based on the earlier work [DKN]. One can see that the NV equation reduces to the Korteweg-de Vries equation (KdV) when $u$ and $v$ are assumed to only depend on the spatial variable $x$.  A feature shared by the KdV and the NV equations is the property to be integrable by inverse scattering. We refer the reader to the review paper [G] for a description of this method. Since 1984 where it appeared for the first time, the NV equation has been widely studied from an analytic standpoint. For instance, fairly recently ([N]) R. Novikov proved that at positive energy $E >0$, the NV equation does not admit exponentially localized solitons. 
\\
 \indent From an algebraic point of view, the value of the energy $E$ does not matter as one can substitute $u$ for $u-E$, hence in the sequel we will assume that $E=0$. The NV equation was originally introduced in [NV] as part of a hierarchy of evolution equations cast in the form of Manakov L-A-B triples ([M])
\begin{equation*}
\frac{d\tilde{\mathcal{H}}}{dt_n}=[\tilde{\mathcal{H}},A_n+\bar{A_n}]+(B_n+\bar{B_n})\tilde{\mathcal{H}}, \, \, \, n \in \mathbb{Z}_{\geq 0},
\end{equation*}
where $\tilde{\mathcal{H}}$ is the two-dimensional Schr\"{o}dinger operator $\partial_z \partial_{\bar{z}}+u$ and $A_n, B_n$ are differential operators. The NV equation corresponds to $n=1$, where $A_1=\partial_z^3+v \partial_z$ and $B_1$ is the operator of multiplication by the function $v_z$.
\\
\indent In this paper, we complete this hierarchy using a two-dimensional Lax-Sato formalism, involving pseudodifferential operators in two commuting derivations $\partial_1$ and $\partial_2$. In a first step, we construct a family of pairwise commuting flows $d/dt_{i,n}$ indexed by $i \in \{1,2\}$ and $n \in \mathbb{Z}_{\geq 0}$ on the space of triples $(\mathcal{L}_1, \mathcal{L}_2, \mathcal{H})$ where $\mathcal{L}_1$ is a formal pseudodifferential operator in  $\partial_1$ such that $\partial_1 \mathcal{L}_1$ is self-adjoint with leading term $\partial_1^2$, $\mathcal{L}_2$ is a formal pseudodifferential operator in  $\partial_2$ such that also $\partial_2 \mathcal{L}_2$ is self-adjoint with leading term $\partial_2^2$, $\mathcal{H}=\partial_1 \partial_2+u$ for some function $u$ and the following relations hold
\begin{equation*}
\mathcal{H} \mathcal{L}_i+\mathcal{L}_i^* \mathcal{H}=0, \, \, \, i \in \{1,2 \}.
\end{equation*}
The restriction of the flows $d/dt_{i,n}, \, n \in \mathbb{Z}_{\geq 0}$ to the pseudodifferential operator $\mathcal{L}_i$ is the BKP hierarchy, hence our construction couples two copies of BKP via the two-dimensional formal Schr\"{o}dinger operator $\partial_1 \partial_2+u$.
In a second step, we show that the flows $d/dt_{1,n}+d/dt_{2,n}$ commute with the involution $(\mathcal{L}_1, \mathcal{L}_2, \mathcal{H}) \mapsto (\mathcal{L}_2, \mathcal{L}_1, \mathcal{H})$ for all $n \in \mathbb{Z}_{\geq 0}$. The NV equation is finally retrieved as the first non-trivial equation in this reduced hierarchy. 
\\
\\
\textit{Acknowledgements.} The author was supported by a Junior Fellow award from the Simons Foundation. He is grateful to Igor Krichever for suggesting this problem and many fruitful discussions. This paper was inspired by Grushevsky and Krichever's work on a difference-differential analogue of the Novikov-Veselov hierarchy in [GK].

\section{pseudodifferential operators}
We recall briefly the definition and some elementary algebraic properties of differential operators over an algebra  endowed with two commuting derivations $\partial_1$ and $\partial_2$. Let $\mathcal{V}$ be a commutative associative algebra over $\mathbb{C}$. Let $\partial_1$ and $\partial_2$ be two commuting $\mathbb{C}$-linear derivations of $\mathcal{V}$. We assume that the algebra $\mathcal{V}$ is a domain. The algebra of differential operators over $\mathcal{V}$ is the space $\mathcal{V}[\partial_1, \partial_2]$ where the multiplication is defined by the relations $\partial_i v=v \partial_i+ \partial_i(v)$ for any $v \in \mathcal{V}$ and $i \in \{1,2 \}$, and $\partial_1 \partial_2=\partial_2 \partial_1$. It embeds in the larger algebras $\mathcal{V}[\partial_2]((\partial_1^{-1}))$ and $\mathcal{V}[\partial_1]((\partial_2^{-1}))$, where the multiplication is defined by
\begin{equation*}
\partial_i^{-1} v=v\partial_i^{-1}-\partial_i(v)\partial_i^{-2}+\partial_i^2(v) \partial_i^{-3}-... \, \, , \, \, v \in \mathcal{V}, \, \, i \in \{1,2\}.
\end{equation*}
The elements of $\mathcal{V}[\partial_2]((\partial_1^{-1}))$ (resp. $\mathcal{V}[\partial_1]((\partial_2^{-1}))$) are called pseudodifferential operators in $\partial_1$ (resp. $\partial_2$) over $\mathcal{V}[\partial_2]$ (resp. $\mathcal{V}[\partial_1]$). 
We define the negative and positive part of a pseudodifferential operator (in $\partial_1$) $\mathcal{P}=\underset{n \leq N}\sum p_n \partial_1^{n} \in \mathcal{V}[\partial_2]((\partial_1^{-1}))$ as follows:
\begin{equation*}
\mathcal{P}_+=\underset{n \geq 0 } \sum p_n \partial_1^n, \, \, \, \, \mathcal{P}_-=\underset{n < 0 } \sum p_n \partial_1^n, \, \, \, \, \mathcal{P}=\mathcal{P}_+ + \mathcal{P}_-  \, .
\end{equation*}
 The positive and negative parts of elements in $\mathcal{V}[\partial_1]((\partial_2^{-1}))$ are defined symmetrically. The adjunction operation on pseudodifferential operators is the unique linear morphism defined by the properties
\begin{equation*}
\partial_i^*=-\partial_i^*, \, \, \,\,  \, v^*=v, \, \,  \,\, \,  (PQ)^*=Q^*P^*,
\end{equation*}
where $i \in \{1,2\}$, $v \in \mathcal{V}$ and $P,Q$ are any two pseudodifferential operators. For any $a \in \mathcal{V}$ we introduce the self-adjoint differential operator 
\begin{equation*}
\mathcal{H}_a=\partial_1\partial_2+a.
\end{equation*}

\begin{lemma} \label{one}
Let $a \in \mathcal{V}$. For any pseudodifferential operator $\mathcal{P} \in \mathcal{V}[\partial_2]((\partial_1^{-1}))$ there exists a unique pseudodifferential operator $\mathcal{Q} \in \mathcal{V}[\partial_2]((\partial_1^{-1}))$ such that $\mathcal{P}-\mathcal{Q} \mathcal{H}_a \in \mathcal{V}((\partial_1^{-1}))$.
\end{lemma}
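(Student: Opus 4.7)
The plan is to prove existence by a polynomial-division-style recursion in $\partial_1$, processing the $\partial_1$-coefficients of $\mathcal{P}$ from top degree downwards, and to prove uniqueness by a direct comparison of leading terms. Write $\mathcal{P} = \sum_{n \le N} c_n \partial_1^n$ with $c_n \in \mathcal{V}[\partial_2]$, and look for $\mathcal{Q} = \sum_{m \le N-1} q_m \partial_1^m$ with $q_m \in \mathcal{V}[\partial_2]$ to be determined recursively in decreasing $m$.

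The key computation is that, since $\partial_1$ and $\partial_2$ commute, the coefficient of $\partial_1^n$ in $\mathcal{Q}\mathcal{H}_a = \mathcal{Q}\partial_1\partial_2 + \mathcal{Q}a$ has the shape
$$
[\mathcal{Q}\mathcal{H}_a]_n \;=\; q_{n-1}\partial_2 \;+\; \sum_{m \ge n} \binom{m}{m-n}\, q_m\, \partial_1^{m-n}(a),
$$
where the first term lies in $\partial_2\,\mathcal{V}[\partial_2]$ and the sum only involves $q_m$ for $m \ge n$. Let $\pi : \mathcal{V}[\partial_2] \to \partial_2\,\mathcal{V}[\partial_2]$ be the projection that kills the $\partial_2$-constant part. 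Then $[\mathcal{P} - \mathcal{Q}\mathcal{H}_a]_n \in \mathcal{V}$ if and only if
$$
q_{n-1}\partial_2 \;=\; \pi(c_n) \;-\; \sum_{m \ge n} \binom{m}{m-n}\, \pi\bigl(q_m\, \partial_1^{m-n}(a)\bigr).
$$
Because right-multiplication by $\partial_2$ is a bijection $\mathcal{V}[\partial_2] \to \partial_2\,\mathcal{V}[\partial_2]$, this uniquely determines $q_{n-1}$ from the already-chosen $q_m$ with $m \ge n$. The recursion starts with $q_m = 0$ for $m \ge N$, which is self-consistent because the equations at $n > N$ have $c_n = 0$. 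Each $q_m$ depends only on finitely many of the $c_k$ and hence is a finite polynomial in $\partial_2$, so $\mathcal{Q} \in \mathcal{V}[\partial_2]((\partial_1^{-1}))$ is well-defined and by construction satisfies $\mathcal{P} - \mathcal{Q}\mathcal{H}_a \in \mathcal{V}((\partial_1^{-1}))$.

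For uniqueness, suppose $\mathcal{R}\mathcal{H}_a \in \mathcal{V}((\partial_1^{-1}))$ for some nonzero $\mathcal{R} \in \mathcal{V}[\partial_2]((\partial_1^{-1}))$ of top $\partial_1$-order $M$ with leading coefficient $r_M \neq 0$. By the same calculation, the coefficient of $\partial_1^{M+1}$ in $\mathcal{R}\mathcal{H}_a$ equals $r_M\partial_2$, which is nonzero and lies in $\partial_2\,\mathcal{V}[\partial_2]$, hence is not in $\mathcal{V}$. This contradicts $\mathcal{R}\mathcal{H}_a \in \mathcal{V}((\partial_1^{-1}))$, forcing $\mathcal{R} = 0$.

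I do not expect a genuine obstacle: the argument is an explicit triangular recursion whose structure is dictated by the leading term $\partial_1\partial_2$ of $\mathcal{H}_a$. The only points requiring care are verifying that the recursion terminates at the top (giving $q_m = 0$ for $m \ge N$) and that each $q_m$ remains a finite polynomial in $\partial_2$; both follow directly from the above formulas.
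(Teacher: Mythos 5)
Your argument is correct, but it is organized along a different grading than the paper's. The paper proves existence by a finite induction on the degree of $\mathcal{P}$ as a polynomial in $\partial_2$: it peels off the top component $\mathcal{P}_N\partial_2^N$ by subtracting $\mathcal{P}_N\partial_2^{N-1}\partial_1^{-1}\mathcal{H}_a$, which lowers the $\partial_2$-degree, and it gets uniqueness by comparing top $\partial_2$-degrees in $\mathcal{P}=\mathcal{Q}\mathcal{H}_a$ (using that $\mathcal{V}$ is a domain). You instead run a descending recursion on the $\partial_1$-coefficients, solving $q_{n-1}\partial_2=\pi\bigl(c_n-\sum_{m\ge n}\binom{m}{m-n}q_m\,\partial_1^{m-n}(a)\bigr)$; this is an infinite but triangular recursion, and your uniqueness argument reads off the $\partial_1^{M+1}$-coefficient $r_M\partial_2$ of $\mathcal{R}\mathcal{H}_a$. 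The paper's route is shorter because the $\partial_2$-degree is finite, so the induction terminates and $\mathcal{Q}$ is exhibited as a finite sum of explicit correction terms; your route buys an explicit coefficient-by-coefficient formula for $\mathcal{Q}$ and delivers existence and uniqueness in a single pass, at the cost of having to check (as you do) that each $q_m$ stays a finite polynomial in $\partial_2$. One small notational correction: the set of operators with vanishing $\partial_2$-constant term is the right multiples $\mathcal{V}[\partial_2]\partial_2$, not the left multiples $\partial_2\,\mathcal{V}[\partial_2]$ (indeed $\partial_2 v=v\partial_2+\partial_2(v)$ has constant term $\partial_2(v)$); since you define $\pi$ as killing the constant part and your solved-for term $q_{n-1}\partial_2$ is a right multiple, the intended statement is clear and the argument is unaffected.
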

\begin{proof}
We prove the existence by induction on the degree of $\mathcal{P}$ as a polynomial in $\partial_2$. If this degree is $0$, one can take $\mathcal{Q}=0$. If it is $N >0$, let $\mathcal{P}_N \partial_2^N$ be its top degree component. Then $\mathcal{P}-\mathcal{P}_N\partial_2^N-\mathcal{P}_N \partial_2^{N-1}\partial_1^{-1}a=\mathcal{P}-\mathcal{P}_N\partial_2^{N-1} \partial_1^{-1} \mathcal{H}_a$ has degree at most $N-1$. By the induction hypothesis there exists $\mathcal{R} \in \mathcal{V}((\partial_1^{-1}))[\partial_2]$ such that 
$\mathcal{P}-\mathcal{P}_N\partial_2^{N-1} \partial_1^{-1} \mathcal{H}_a-\mathcal{R} \mathcal{H}_a \in \mathcal{V}((\partial_1^{-1}))$. Hence one can let $\mathcal{Q}=\mathcal{R}+\mathcal{P}_N\partial_2^{N-1} \partial_1^{-1}$. As for the unicity, one needs to prove that if $\mathcal{P} \in \mathcal{V}((\partial_1^{-1}))$ and $\mathcal{Q} \in \mathcal{V}[\partial_2]((\partial_1^{-1}))$ are such that $\mathcal{P}=\mathcal{Q} \mathcal{H}_a$, then $\mathcal{P}=\mathcal{Q}=0$. This follows from looking at the top degrees of both sides as polynomials in $\partial_2$, since $\mathcal{V}$ is a domain.
\end{proof}
Note that the same statement holds after swapping $\partial_1$ with $\partial_2$. Finally, the (differential) order of a pseudodifferential operator is the grading of its top graded component where both $\partial_1$ and $\partial_2$ have grading $1$ and elements of $\mathcal{V}$ have grading $0$.

\section{A coupled BKP hierarchy}
Let $\mathcal{A}$ be the algebra of differential polynomials over $\mathbb{C}$ generated by the elements $u$, $(v_i)_{i \in \mathbb{Z}_{\geq 0}}$, $(w_j)_{j \in \mathbb{Z}_{\geq 0}}$ and their jets $\partial_1^{a} \partial_2^b(u), \partial_1^c \partial_2^d(v_i), \partial_1^{e} \partial_2^f(w_j), \, a,b,c,d,e,f \in \mathbb{Z}_{\geq 0}$ for two commuting derivations $\partial_1$, $\partial_2$, subject to the relations
\begin{equation}\label{defrel}
\partial_2(\mathcal{L}_1)=[\mathcal{L}_1, \partial_1^{-1}u], \, \, \, \partial_1(\mathcal{L}_2)=[\mathcal{L}_2, \partial_2^{-1}u],
\end{equation}
where 
\begin{equation*}
\begin{split}
\mathcal{L}_1&=\partial_1^{-1}(\partial_1^2+v_0+\partial_1^{-1}v_1 \partial_1^{-1}+\partial_1^{-2}v_2 \partial_1^{-2}+ ...),
\\
\mathcal{L}_2&=\partial_2^{-1}(\partial_2^2+w_0+\partial_2^{-1}w_1 \partial_2^{-1}+\partial_2^{-2}w_2 \partial_2^{-2}+ ...).
\end{split}
\end{equation*}
The LHS of equations \eqref{defrel} should be understood as follows
\begin{equation*}
\begin{split}
\partial_2(\mathcal{L}_1)&=\partial_1^{-1}(\partial_1^2+\partial_2(v_0)+\partial_1^{-1}\partial_2(v_1) \partial_1^{-1}+\partial_1^{-2}\partial_2(v_2) \partial_1^{-2}+ ...),
\\
\partial_1(\mathcal{L}_2)&=\partial_2^{-1}(\partial_2^2+\partial_1(w_0)+\partial_2^{-1}\partial_1(w_1) \partial_2^{-1}+\partial_2^{-2}\partial_1(w_2) \partial_2^{-2}+ ...).
\end{split}
\end{equation*}
Note that both $\partial_1\mathcal{L}_1$ and $\partial_2 \mathcal{L}_2$ are self-adjoint by construction. The relations \eqref{defrel} are well-defined since both $\partial_1[\mathcal{L}_1,\partial_1^{-1}u]$ and $\partial_2[\mathcal{L}_2,\partial_2^{-1}u]$ are self-adjoint pseudodifferential operators of order at most $0$. Indeed, for all $i \in \{1,2 \}$, 
\begin{equation*}
\begin{split}
(\partial_i[\mathcal{L}_i,\partial_i^{-1}u])^*&=(\partial_i \mathcal{L}_i \partial_i^{-1}u-u \partial_i^{-1} \partial_i \mathcal{L}_i)^* \\
                      &=-u\partial_i^{-1} (\partial_i\mathcal{L}_i)^* +(\partial_i\mathcal{L}_i)^* \partial_i^{-1} u \\
                      &=-u \mathcal{L}_i + \partial_i \mathcal{L}_i \partial_i^{-1}u\\
                      &=\partial_i[\mathcal{L}_1,\partial_i^{-1}u].
\end{split}
 \end{equation*}
  Explicitely, we have 
\begin{equation*}
\mathcal{A}=\mathbb{C}[\partial_1^n(v_m), \partial_2^k(w_l), \partial_1^p \partial_2^q(u)| k,l,m,n,p,q \in \mathbb{Z}_{\geq 0}],
\end{equation*}
 and the $\partial_2$ (resp. $\partial_1$) jets of the $v_i$'s (resp. $w_i$'s) can be expressed in terms of their $\partial_1$ (resp. $\partial_2$) jets and of $u$. In particular, the two first terms in \eqref{defrel} give
\begin{equation} \label{rela}
\begin{split}
\partial_2(v_0)&=\partial_1(u), \, \, \partial_2(v_1)=\partial_1(u)v_0-u \partial_1(v_0), \\
\partial_1(w_0)&=\partial_2(u), \, \, \partial_1(w_1)=\partial_2(u)w_0-u \partial_2(w_0).
\end{split}
\end{equation}
The subfields of constants for $\partial_1$ and $\partial_2$ in $\mathcal{A}$ are both equal to $\mathbb{C}$. We have  $\partial_2(\mathcal{L}_1)=\partial_2 \mathcal{L}_1- \mathcal{L}_1 \partial_2$ and similarly $\partial_1(\mathcal{L}_2)=\partial_1 \mathcal{L}_2- \mathcal{L}_2 \partial_1$, hence
the two equations \eqref{defrel} can be rewritten in the equivalent form
\begin{equation} \label{defrel2}
[\mathcal{L}_1, \partial_2+ \partial_1^{-1} u]=0, \, \, \, \, [\mathcal{L}_2, \partial_1+\partial_2^{-1}u]=0,
\end{equation}
which can be recast, using relations $\partial_i \mathcal{L}_i+\mathcal{L}_i^* \partial_i=0$  where $i \in \{1,2 \}$, in the form
\begin{equation} \label{defrel3}
\mathcal{H} \mathcal{L}_1=-\mathcal{L}_1^* \mathcal{H}, \, \, \, \, \mathcal{H} \mathcal{L}_2=-\mathcal{L}_2^* \mathcal{H},
\end{equation}
where $\mathcal{H}$ is the formal two-dimensional Schr\"{o}dinger operator 
\begin{equation*}
\mathcal{H}=\partial_1 \partial_2+u. 
\end{equation*}
 For all $n \in \mathbb{Z}_{\geq 0}$ and $i \in \{1,2\}$, we let 
 \begin{equation*}
 A_{i,n}=(\mathcal{L}_i^{2n+1})_+ \in \mathcal{A}[\partial_i].
 \end{equation*}
  For instance, we have
\begin{equation} \label{As}
A_{1,0}=\partial_1, \, \, \, A_{2,1}=\partial_2^3+3w_0 \partial_2, \, \, \, A_{1,2}= \partial_1^5+5v_0 \partial_1^3+5 \partial_1(v_0) \partial_1^2+(5\partial_1^2(v_0)+5v_1+10 v_0^2)\partial_1.
 \end{equation}
   By Lemma \ref{one}, for all $n \in \mathbb{Z}_{\geq 0}$, there exists a unique decomposition 
\begin{equation*}
A_{1,n}=B_{2,n}+C_{2,n} \mathcal{H},
\end{equation*}
where $B_{2,n} \in \mathcal{A}((\partial_2^{-1}))$ and $C_{2,n} \in \mathcal{A}[\partial_1]((\partial_2^{-1}))$. Note that for all $n \in \mathbb{Z}_{\geq 0}$ the order of $B_{2,n}$ is negative. We define $B_{1,n}  \in \mathcal{A}((\partial_1^{-1}))$ and $C_{1,n}  \in \mathcal{A}[\partial_2]((\partial_1^{-1}))$ uniquely for all $n \in \mathbb{Z}_{\geq 0}$ in a symmetric way:
\begin{equation*}
A_{2,n}=B_{1,n}+C_{1,n} \mathcal{H}.
\end{equation*}
For instance, we have
\begin{equation*}
B_{2,0}=-\partial_2^{-1}u, \, \, \, B_{1,1}=-\partial_1^{-1}(\partial_2^2(u)+3uw_0)+\partial_1^{-1}u\partial_1^{-1} \partial_2(u)-\partial_1^{-1} \partial_2(u) \partial_1^{-1} u-\partial_1^{-1}u\partial_1^{-1}u\partial_1^{-1}u.
\end{equation*}

\begin{lemma}  \label{two}
For all $n \in \mathbb{Z}_{\geq 0}$ and $i \in \{1,2 \}$,  $A_{i,n} \partial_i^{-1}$ is a self-adjoint differential operator in $\partial_i$.
\end{lemma}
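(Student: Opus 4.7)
The plan is to exploit the self-adjointness of $\partial_i\mathcal{L}_i$, which by construction yields the identity $\mathcal{L}_i^* = -\partial_i\mathcal{L}_i\partial_i^{-1}$. Iterating and using $(XY)^* = Y^*X^*$, I find that $(\mathcal{L}_i^{2n+1})^* = (\mathcal{L}_i^*)^{2n+1} = (-\partial_i\mathcal{L}_i\partial_i^{-1})^{2n+1} = -\partial_i\mathcal{L}_i^{2n+1}\partial_i^{-1}$, the intermediate $\partial_i^{-1}\partial_i$ pairs telescoping while the odd power keeps a single minus sign. A short direct computation then gives $(\mathcal{L}_i^{2n+1}\partial_i^{-1})^* = (\partial_i^{-1})^*(\mathcal{L}_i^{2n+1})^* = -\partial_i^{-1}\cdot(-\partial_i\mathcal{L}_i^{2n+1}\partial_i^{-1}) = \mathcal{L}_i^{2n+1}\partial_i^{-1}$, so the pseudodifferential operator $\mathcal{L}_i^{2n+1}\partial_i^{-1}$ is self-adjoint.

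The rest of the argument converts self-adjointness of this pseudodifferential operator into the claimed self-adjointness of the genuine differential operator $A_{i,n}\partial_i^{-1}$. First I would record the general fact that any self-adjoint pseudodifferential operator $Q = \sum_k q_k \partial_i^k$ has vanishing residue $q_{-1}=0$: expanding $Q^* = \sum_k (-1)^k \partial_i^k q_k$ and tracking where each term contributes, only the $k=-1$ summand produces a $\partial_i^{-1}$-coefficient, and that coefficient equals $-q_{-1}$. Applied to $\mathcal{L}_i^{2n+1}\partial_i^{-1}$, whose residue is precisely the coefficient of $\partial_i^{0}$ in $\mathcal{L}_i^{2n+1}$, this forces the latter to vanish. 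Consequently $A_{i,n}\partial_i^{-1} = (\mathcal{L}_i^{2n+1})_+\partial_i^{-1}$ has no $\partial_i^{-1}$ term and is a genuine differential operator of order $2n$.

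It remains to promote it to a self-adjoint one. I would observe that adjunction respects the decomposition of a pseudodifferential operator into its polynomial part (degrees $\geq 0$) and its strictly-negative-order part: a polynomial in $\partial_i$ manifestly has polynomial adjoint of the same degree, and expanding $\partial_i^{-k}v$ via the rule defining pseudodifferential multiplication shows that the adjoint of an operator with only powers $\partial_i^{j}$, $j\leq -1$, stays of strictly negative order. Thus the splitting $\mathcal{L}_i^{2n+1}\partial_i^{-1} = A_{i,n}\partial_i^{-1} + (\mathcal{L}_i^{2n+1})_-\partial_i^{-1}$ — which by the previous step is exactly polynomial-plus-strictly-negative — is preserved by adjunction, and self-adjointness of the sum descends to each summand. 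This yields $(A_{i,n}\partial_i^{-1})^* = A_{i,n}\partial_i^{-1}$, and the case $i=2$ follows by the symmetric argument in $\partial_2$. The only non-trivial step is the residue identification; it is a routine pseudodifferential calculation, so I do not anticipate any serious obstacle.
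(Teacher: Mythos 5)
Your proof is correct, and it reaches the conclusion by a different mechanism than the paper, although both arguments pivot on the same identity $\partial_i\mathcal{L}_i^{2n+1}=-(\mathcal{L}_i^{2n+1})^*\partial_i$. The paper works with the self-adjoint operator $\partial_i\mathcal{L}_i^{2n+1}$ and invokes the symmetric normal form of a self-adjoint pseudodifferential operator, $\partial_i^{2n+2}+\sum_{k\le n}\partial_i^k a_{i,k}\partial_i^k$; dividing by $\partial_i$ on the left, discarding the terms with $k\le 0$ (which are exactly the nonpositive part), and multiplying by $\partial_i^{-1}$ on the right leaves $\partial_i^{2n}+\sum_{k=1}^{n}\partial_i^{k-1}a_{i,k}\partial_i^{k-1}$, which is self-adjoint term by term. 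You instead work with $\mathcal{L}_i^{2n+1}\partial_i^{-1}$, prove it is self-adjoint, and then combine two general observations: a self-adjoint operator has vanishing residue (your residue computation is right --- only the $\partial_i^{-1}q_{-1}$ summand can contribute to the $\partial_i^{-1}$ coefficient of the adjoint, and it contributes $-q_{-1}$, so $2q_{-1}=0$ in the domain $\mathcal{A}$), and adjunction preserves the splitting into the part of degree $\ge 0$ and the part of degree $<0$, so each summand of a self-adjoint operator with respect to this splitting is itself self-adjoint. The vanishing residue is needed precisely so that $A_{i,n}\partial_i^{-1}$ is the nonnegative part of $\mathcal{L}_i^{2n+1}\partial_i^{-1}$ (equivalently, so that $A_{i,n}$ has no order-zero term and $A_{i,n}\partial_i^{-1}$ is a genuine differential operator), and that is exactly the claim being proved. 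Your route buys you a proof that avoids justifying the normal form $\sum_k\partial_i^k a_k\partial_i^k$ (which the paper asserts without proof, though it follows by an easy induction on the order); the paper's route is shorter and in addition exhibits the explicit symmetric shape of $A_{i,n}\partial_i^{-1}$, which makes the self-adjointness visible at a glance. Both are complete.
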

\begin{proof}
 Since $\partial_i \mathcal{L}_i=-\mathcal{L}_i^* \partial_i$,  we have $\partial_i \mathcal{L}_i^{2n+1}=-(\mathcal{L}_i^{2n+1})^* \partial_i$ for all $n \in \mathbb{Z}_{\geq 0}$. Hence $\partial_i \mathcal{L}_i^{2n+1}$ is a self-adjoint pseudodifferential operator in $\partial_i$ and can be written as
 \begin{equation*}
 \partial_i \mathcal{L}_i^{2n+1}=\partial_i^{2n+2}+\partial_i^n a_{i,n} \partial_i^n+...+ \partial_i a_{i,1} \partial_i+a_{i,0}+ \partial_i^{-1} a_{i,-1} \partial_i^{-1}+...
 \end{equation*}
 for some elements $a_{i,k} \in \mathcal{A}$, where $k \in \mathbb{Z}_{ \leq n}$. It follows that 
  \begin{equation*}
A_{i,n}\partial_i^{-1}=\partial_i^{2n}+\partial_i^{n-1} a_{i,n} \partial_i^{n-1}+...+ a_{i,1},
 \end{equation*}
 proving the lemma.
\end{proof}

\begin{lemma} \label{three}
For all $i \in \{1,2 \}$ and $n \in \mathbb{Z}_{\geq 0}$ 
\begin{equation} \label{AH}
\mathcal{H} A_{i,n}+A^*_{i,n} \mathcal{H}=A^*_{i,n}(u).
\end{equation}
\end{lemma}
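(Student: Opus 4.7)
The plan is to reduce the identity to a residue computation for pseudodifferential operators; I sketch the argument for $i = 1$, the case $i = 2$ being symmetric. Iterating the fundamental relation \eqref{defrel3} via $\mathcal{H} \mathcal{L}_1^k = (-1)^k (\mathcal{L}_1^k)^* \mathcal{H}$ gives $\mathcal{H} L + L^* \mathcal{H} = 0$ for $L := \mathcal{L}_1^{2n+1}$. Decomposing $L = A + M$ with $A = A_{1,n} = L_+$ and $M = L_- \in \mathcal{A}((\partial_1^{-1}))$ of strictly negative order in $\partial_1$, one obtains
\begin{equation*}
\mathcal{H} A + A^* \mathcal{H} = -(\mathcal{H} M + M^* \mathcal{H}).
\end{equation*}
The left side is a differential operator, so the right side must be as well: its negative part in $\partial_1$ vanishes.

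Next I compute the positive part in $\partial_1$ of $\mathcal{H} M + M^* \mathcal{H}$ directly. Writing $M = \sum_{k \geq 1} m_k \partial_1^{-k}$ with $m_k \in \mathcal{A}$ (crucially a function, not an operator in $\partial_2$, because $\mathcal{L}_1 \in \mathcal{A}((\partial_1^{-1}))$ involves only $\partial_1$), a short expansion using $\partial_1 m = m \partial_1 + \partial_1(m)$ and $\partial_1^{-k} m = m \partial_1^{-k} + (\text{strictly more negative})$ yields
\begin{equation*}
(\partial_1 \partial_2 M)_+ = m_1 \partial_2 + \partial_2(m_1), \qquad (M^* \partial_1 \partial_2)_+ = -m_1 \partial_2,
\end{equation*}
while $(uM)_+ = (M^* u)_+ = 0$. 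The two operator terms $\pm m_1 \partial_2$ cancel---this cancellation is only possible because $m_1$ commutes with $\partial_2$ as a scalar---and I conclude $\mathcal{H} M + M^* \mathcal{H} = \partial_2(m_1) \in \mathcal{A}$, so $\mathcal{H} A + A^* \mathcal{H} = -\partial_2(m_1)$.

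The final step is to identify $-\partial_2(m_1)$ with $A^*(u)$. From \eqref{defrel2}, $\partial_2(\mathcal{L}_1) = [\mathcal{L}_1, \partial_1^{-1} u]$, and this propagates via a telescoping Leibniz argument to $\partial_2(L) = [L, \partial_1^{-1} u]$. Taking the $\partial_1$-residue of both sides, the left gives $\partial_2(m_1)$. Writing $L = \sum_k l_k \partial_1^k$ and expanding each of $L \partial_1^{-1} u$ and $\partial_1^{-1} u L$ with Leibniz, one finds $\mathrm{Res}(L \partial_1^{-1} u) = l_0 u$ and $\mathrm{Res}(\partial_1^{-1} u L) = \sum_{k \geq 0} (-1)^k \partial_1^k(u l_k)$. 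By Lemma \ref{two}, $A \partial_1^{-1}$ is a genuine differential operator, forcing the constant term $l_0 = 0$; hence the first residue vanishes and the second equals $\sum_{k \geq 1} (-1)^k \partial_1^k(u l_k) = A^*(u)$, since $A^* = \sum_{k \geq 1} (-1)^k \partial_1^k l_k$. Therefore $\partial_2(m_1) = -A^*(u)$ and the identity follows.

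The main obstacle is the positive-part bookkeeping in the middle step: one must track how $\mathcal{H} = \partial_1 \partial_2 + u$ interacts with negative powers of $\partial_1$ on both sides and verify that the surviving terms collapse to a scalar. The key facts making this work are that $m_1$ is scalar-valued (from $\mathcal{L}_1 \in \mathcal{A}((\partial_1^{-1}))$) and that $l_0 = 0$ (from Lemma \ref{two}); without either, spurious operator-valued terms would survive in the computation and the identity would be false.
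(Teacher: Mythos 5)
Your argument is correct, but it reorganizes the computation in a way that differs from the paper's. The paper starts from the single identity $[\mathcal{L}_i^{2n+1},\partial_i^{-1}\mathcal{H}]=0$ (the constraint in the form \eqref{defrel2}) and takes its nonnegative part in $\partial_i$: Lemma \ref{two} makes $A_{i,n}\partial_i^{-1}\mathcal{H}$ a genuine differential operator equal to $-\partial_i^{-1}A_{i,n}^*\mathcal{H}$, and the only surviving negative tail is $(\partial_i^{-1}uA_{i,n})_-=\partial_i^{-1}A_{i,n}^*(u)$, which produces the scalar term in one stroke. You instead invoke the constraint twice, in two different guises: first \eqref{defrel3}, iterated to $\mathcal{H}L+L^*\mathcal{H}=0$ for $L=\mathcal{L}_1^{2n+1}$, reduces the claim to evaluating $\mathcal{H}M+M^*\mathcal{H}$ for $M=L_-$, which your symbol computation correctly collapses to the scalar $\partial_2(m_1)$ with $m_1=\mathrm{Res}_{\partial_1}L$ (the cancellation of $\pm m_1\partial_2$ does rely, as you note, on $m_1\in\mathcal{A}$); second, the relation $\partial_2(L)=[L,\partial_1^{-1}u]$ --- which is \eqref{defrel} rather than \eqref{defrel2}, a harmless mislabel --- whose $\partial_1$-residue identifies $\partial_2(m_1)$ with $-A^*(u)$, using $l_0=0$, which you correctly extract from Lemma \ref{two}. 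All the individual steps check out: $\mathrm{Res}(L\partial_1^{-1}u)=l_0u=0$ and $\mathrm{Res}(\partial_1^{-1}uL)=\sum_{k\ge 0}(-1)^k\partial_1^k(ul_k)=A^*(u)$. What your route buys is an explicit interpretation of the right-hand side of \eqref{AH} as $-\partial_2\bigl(\mathrm{Res}_{\partial_1}\mathcal{L}_1^{2n+1}\bigr)$, a conserved-density-type identity not visible in the paper's version; what it costs is the extra residue bookkeeping that the paper avoids by working with $\partial_i^{-1}\mathcal{H}$ instead of $\mathcal{H}$ from the outset.
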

\begin{proof}
From the identity \eqref{defrel2}, which can be rewritten as $[\mathcal{L}_i, \partial_i^{-1} \mathcal{H}]=0$, it follows that $[\mathcal{L}_i^{2n+1}, \partial_i^{-1} \mathcal{H}]=0$. Therefore,
 \begin{equation*}
 \begin{split}
 0&=[\mathcal{L}_i^{2n+1}, \partial_i^{-1} \mathcal{H}]_+ \\
  &=[A_{i,n}, \partial_i^{-1} \mathcal{H}]_+ \\
  &=A_{i,n} \partial_i^{-1} \mathcal{H}-\partial_i^{-1} \mathcal{H} A_{i,n} +(\partial_i^{-1} \mathcal{H} A_{i,n})_{-} \\
  &=A_{i,n} \partial_i^{-1} \mathcal{H}-\partial_i^{-1} \mathcal{H} A_{i,n} +(\partial_i^{-1}u A_{i,n})_{-} \\
  &=A_{i,n} \partial_i^{-1} \mathcal{H}-\partial_i^{-1} \mathcal{H} A_{i,n} +\partial_i^{-1}A_{i,n}^*(u)\\
  &=\partial_i^{-1}(A_{i,n}^*(u)-A_{i,n}^* \mathcal{H}- \mathcal{H} A_{i,n}).
 \end{split}
 \end{equation*}
 To obtain the third line in this system of equations we used Lemma \ref{two}, and to deduce the fifth line from the fourth we used the fact that any differential operator $\mathcal{P}$ in $\partial_i$ is equal modulo the right ideal $\partial_i \mathcal{A}[\partial_i]$ to $\mathcal{P}^*(1)$.
 \end{proof}

\begin{lemma} \label{four}
For all $n \in \mathbb{Z}_{\geq 0}$ and $i \in \{1,2\}$, $\partial_i B_{i,n}$ is a self-adjoint pseudodifferential operator in $\partial_i$.
\end{lemma}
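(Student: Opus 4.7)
The plan is to handle the case $i=1$; the case $i=2$ will follow by interchanging $\partial_1$ and $\partial_2$. The starting point is the identity of Lemma~\ref{three} applied with $i=2$:
\[
\mathcal{H} A_{2,n} + A_{2,n}^* \mathcal{H} = A_{2,n}^*(u).
\]
First I would substitute the decomposition $A_{2,n} = B_{1,n} + C_{1,n}\mathcal{H}$ (so $A_{2,n}^* = B_{1,n}^* + \mathcal{H} C_{1,n}^*$ by the self-adjointness of $\mathcal{H}$), obtaining
\[
\mathcal{H} B_{1,n} + B_{1,n}^*\mathcal{H} + \mathcal{H} D \mathcal{H} = A_{2,n}^*(u), \qquad D \seteq C_{1,n} + C_{1,n}^*.
\]

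The heart of the argument is a comparison of $\partial_2$-degrees, viewing every pseudodifferential operator as an element of $\mathcal{A}[\partial_2]((\partial_1^{-1}))$ in normal form (with $\partial_2$'s to the right). Because $B_{1,n}$ is pure in $\partial_1$, a direct expansion using $\mathcal{H} = \partial_1\partial_2 + u$ gives
\[
\mathcal{H} B_{1,n} + B_{1,n}^*\mathcal{H} = (\partial_1 B_{1,n} + B_{1,n}^*\partial_1)\,\partial_2 + \partial_1\partial_2(B_{1,n}) + u B_{1,n} + B_{1,n}^* u,
\]
which has $\partial_2$-degree at most $1$, while $A_{2,n}^*(u) \in \mathcal{A}$ has $\partial_2$-degree $0$. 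On the other hand, if $D$ were nonzero of $\partial_2$-degree $K \geq 0$ with top coefficient $d_K \in \mathcal{A}((\partial_1^{-1}))$, then a short calculation shows that $\mathcal{H} D\mathcal{H}$ has $\partial_2$-degree exactly $K+2$ and top coefficient $\partial_1 d_K \partial_1$, which is nonzero because $\partial_1$ is invertible in $\mathcal{A}((\partial_1^{-1}))$. Since no other term in the identity can cancel this top coefficient, we are forced to conclude $D = 0$.

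Once $D$ is eliminated, the identity collapses to
\[
(\partial_1 B_{1,n} + B_{1,n}^*\partial_1)\,\partial_2 + \partial_1\partial_2(B_{1,n}) + u B_{1,n} + B_{1,n}^* u = A_{2,n}^*(u),
\]
with a $\partial_2$-free right-hand side. Extracting the coefficient of $\partial_2$ on both sides then gives $\partial_1 B_{1,n} + B_{1,n}^*\partial_1 = 0$, and since $(\partial_1 B_{1,n})^* = -B_{1,n}^*\partial_1$, this is exactly the self-adjointness of $\partial_1 B_{1,n}$.

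The delicate point will be the degree computation for $\mathcal{H} D \mathcal{H}$: one has to carefully track how pure-$\partial_1$ coefficients, multiplication by $u$, and $\partial_2$'s interact when $\mathcal{H}$ is multiplied on either side of $D$, and verify that the top coefficient does not accidentally vanish. As a pleasant byproduct, the same argument establishes $C_{1,n} + C_{1,n}^* = 0$, i.e.\ $C_{1,n}$ is itself anti-self-adjoint.
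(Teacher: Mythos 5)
Your proof is correct, but it takes a genuinely different route from the paper's. The paper also starts from Lemma~\ref{three}, but instead of substituting the decomposition $A_{2,n}=B_{1,n}+C_{1,n}\mathcal{H}$ it reduces the identity modulo right multiplication by $\mathcal{H}$ to obtain the pure-$\partial_1$ relation $A_{2,n}^*(u)=\partial_2(X)+u\partial_1^{-1}X-X\partial_1^{-1}u$ with $X=\partial_1 B_{1,n}$; subtracting the adjoint relation gives $\partial_2(X-X^*)=(X-X^*)\partial_1^{-1}u-u\partial_1^{-1}(X-X^*)$, and $X=X^*$ then follows from a differential-order argument that uses two structural facts about $\mathcal{A}$: the coefficients of $X$ are differential polynomials with no constant term and the $\partial_2$-constants of $\mathcal{A}$ are just $\mathbb{C}$, so $\partial_2$ preserves the differential order of a nonzero such operator while the right-hand side strictly lowers it. Your route --- killing $D=C_{1,n}+C_{1,n}^*$ by sandwiching it between two copies of $\mathcal{H}$ and counting $\partial_2$-degrees, then reading off $\partial_1 B_{1,n}+B_{1,n}^*\partial_1=0$ from the coefficient of $\partial_2$ --- is purely formal: it works over any differential domain, makes no appeal to the structure of the constants, and yields the skew-adjointness of $C_{1,n}$ as a byproduct. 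It is essentially the same device the paper deploys later, in the proof of Lemma~\ref{five}, to show that $R_{n,m}$ is skew-adjoint (there $\mathcal{H}(R_{n,m}+R_{n,m}^*)\mathcal{H}$ is forced into too small a space). The degree computation you flag as delicate does check out: in normal form the $\partial_2^{K+2}$-coefficient of $\mathcal{H}D\mathcal{H}$ is $\partial_1 d_K \partial_1$, which is nonzero because $\mathcal{A}((\partial_1^{-1}))$ is a domain, and no other term in the identity reaches $\partial_2$-degree above $1$. The paper's argument is shorter once the mod-$\mathcal{H}$ identity is in hand, but it leans on properties special to this particular $\mathcal{A}$, whereas yours does not.
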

\begin{proof} By symmetry of the construction we only need to prove the statement for $i=1$.  Let $X=\partial_1 B_{1,n}$. The following equalities hold modulo right multiplication by $\mathcal{H}$ in the algebra $\mathcal{A}[\partial_1]((\partial_2^{-1}))$. By definition of the pseudodifferential operator $B_{1,n}$  we have 
 \begin{equation*}
 \partial_1 A_{2,n}=X \, \, \text{mod} \, \, \mathcal{H}.
 \end{equation*}
 By Lemma \ref{three}, we get
 \begin{equation*}
 \begin{split}
 A_{2,n}^*(u)&=\mathcal{H} A_{2,n} \, \, \text{mod} \, \, \mathcal{H} \\
  &=\mathcal{H} \partial_1^{-1} X \, \, \text{mod} \, \, \mathcal{H} \\
 &=\partial_2 X+u \partial_1^{-1} X \, \, \text{mod} \, \, \mathcal{H} \\
                    &= \partial_2(X) + X \partial_2 + u \partial_1^{-1} X \, \, \text{mod} \, \, \mathcal{H} \\
                    &=\partial_2(X) +u \partial_1^{-1} X-X \partial_1^{-1} u  \, \, \text{mod} \, \, \mathcal{H}.
 \end{split}
 \end{equation*}
 Since both sides of the equality do not depend on $\partial_2$, one can remove $\text{mod} \, \, \mathcal{H}$ by Lemma \ref{one} and get:
 \begin{equation*}
 A_{2,n}^*(u)=\partial_2(X) +u \partial_1^{-1} X-X \partial_1^{-1} u.
 \end{equation*}
 After taking the adjoint of this equation, we see that $X-X^*$ must satisfy the differential equation
 \begin{equation} \label{key}
 \partial_2(X-X^*)=(X-X^*) \partial_1^{-1}u-u \partial_1^{-1}(X-X^*),
 \end{equation}
 from which it follows that $X=X^*$. Indeed, the coefficients of $X$ as a pseudodifferential operator in $\partial_1$ are differential polynomials in the generators of $\mathcal{A}$ with no constant part and as we noted earlier the subfield of constants for $\partial_2$ in $\mathcal{A}$ is $\mathbb{C}$. Hence, if we assume that $X-X^*$ is nonzero, its differential order is the same as the differential order of $\partial_2(X-X^*)$, which gives a contradiction using equation \eqref{key}.
\end{proof}

We recall that an evolutionary derivation of $\mathcal{A}$ is a derivation which commutes with both $\partial_1$ and $\partial_2$. In particular, it is uniquely defined by the values it takes on the generators of $\mathcal{A}$, or equivalently by the values it takes on $\mathcal{L}_1$, $\mathcal{L}_2$ and $\mathcal{H}$.

\begin{proposition}\label{pro}
The following formulas induce a well-defined family $(d/dt_{i,n})_{i \in \{1,2\}, n \in \mathbb{Z}_{\geq 0}}$ of evolutionary derivations of $(\mathcal{A},\partial_1,\partial_2)$:
\begin{equation*}
\begin{split}
\frac{d \mathcal{L}_1}{dt_{1,n}}&=[A_{1,n}, \mathcal{L}_1], \, \, \, \frac{d \mathcal{L}_2}{dt_{2,n}}=[A_{2,n}, \mathcal{L}_2], \\
\frac{d \mathcal{L}_1}{dt_{2,n}}&=[A_{2,n}, \mathcal{L}_1] \, \, \text{mod} \, \, \mathcal{H}=[B_{1,n}, \mathcal{L}_1],\\
\frac{d \mathcal{L}_2}{dt_{1,n}}&=[A_{1,n}, \mathcal{L}_2] \, \, \text{mod} \, \, \mathcal{H}=[B_{2,n}, \mathcal{L}_2], \\
\frac{d \mathcal{H}}{dt_{1,n}}&=[A_{1,n}, \mathcal{H}] \, \, \text{mod} \, \, \mathcal{H}=-A_{1,n}^*(u), \\
\frac{d \mathcal{H}}{dt_{2,n}}&=[A_{2,n}, \mathcal{H}] \, \, \text{mod} \, \, \mathcal{H}=-A_{2,n}^*(u).
\end{split}
\end{equation*}
\end{proposition}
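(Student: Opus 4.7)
My plan has three parts: verify the equalities ``modulo $\mathcal{H}$'' asserted in the statement, check that each formula produces an element of the expected target space, and check that the formulas define a consistent family of evolutionary derivations of $\mathcal{A}$.

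The first two parts are straightforward. The equality $[A_{i,n}, \mathcal{H}] \equiv -A_{i,n}^*(u)$ modulo $\mathcal{H}$ follows from Lemma \ref{three} ($\mathcal{H} A_{i,n} = -A_{i,n}^*\mathcal{H} + A_{i,n}^*(u)$) together with the observation that $A_{i,n}\mathcal{H}$ is already a right-multiple of $\mathcal{H}$. The equality $[A_{2,n}, \mathcal{L}_1] \equiv [B_{1,n}, \mathcal{L}_1]$ modulo $\mathcal{H}$ (and its $\mathcal{L}_2$ counterpart) follows from $A_{2,n} = B_{1,n} + C_{1,n}\mathcal{H}$ combined with $\mathcal{H}\mathcal{L}_1 = -\mathcal{L}_1^*\mathcal{H}$ from \eqref{defrel3}, via $[C_{1,n}\mathcal{H}, \mathcal{L}_1] = -(C_{1,n}\mathcal{L}_1^* + \mathcal{L}_1 C_{1,n})\mathcal{H}$. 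For the target-space check, $\partial_i \cdot d\mathcal{L}_i/dt_{j,n}$ should be a self-adjoint pseudodifferential operator in $\partial_i$ of non-positive order. In the diagonal case $j = i$ this is a standard BKP calculation using Lemma \ref{two}. In the cross case $j \ne i$, write $\mathcal{L}_i = \partial_i^{-1} M$ and $B_{i,n} = \partial_i^{-1} X$ with $M$ and $X$ self-adjoint (by construction and Lemma \ref{four} respectively), so that $\partial_i [B_{i,n}, \mathcal{L}_i] = X \partial_i^{-1} M - M \partial_i^{-1} X$ is manifestly self-adjoint and of negative order.

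The heart of the proof is showing that each $D = d/dt_{j,n}$ preserves the relations \eqref{defrel3}. Since $\mathcal{A}$ is a polynomial algebra on its independent generators, the commutativity of $D$ with $\partial_1, \partial_2$ on the free generators is automatic, and the remaining consistency requirement is exactly that $D(\mathcal{H}\mathcal{L}_i + \mathcal{L}_i^*\mathcal{H}) = 0$ in the pseudodifferential algebra for each $i$. In the diagonal case $j = i$, the Leibniz expansion combined with the substitution $A_{i,n}^*(u) = \mathcal{H} A_{i,n} + A_{i,n}^*\mathcal{H}$ from Lemma \ref{three} and with \eqref{defrel3} produces pairwise cancellation of the four resulting terms.

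The cross case $j \ne i$ is the main obstacle. The analogous expansion, after using Lemma \ref{three} applied to $A_{j,n}$ together with the decomposition $A_{j,n} = B_{i,n} + C_{i,n}\mathcal{H}$, reduces the verification to showing $\mathcal{H}(C_{i,n} + C_{i,n}^*)\mathcal{H} = 0$, i.e., that $C_{i,n}$ is anti-self-adjoint. To prove this, I substitute the decomposition and its adjoint into Lemma \ref{three}, obtaining
\begin{equation*}
A_{j,n}^*(u) = (\mathcal{H} B_{i,n} + B_{i,n}^*\mathcal{H}) + \mathcal{H}(C_{i,n} + C_{i,n}^*)\mathcal{H}.
\end{equation*}
The term $\mathcal{H} B_{i,n} + B_{i,n}^*\mathcal{H}$ is free of $\partial_j$ because its $\partial_j$-degree-one contributions $\partial_i B_{i,n} \partial_j + B_{i,n}^* \partial_i \partial_j$ cancel by the self-adjointness of $\partial_i B_{i,n}$ (Lemma \ref{four}), and $A_{j,n}^*(u) \in \mathcal{A}$ is trivially $\partial_j$-free; hence $\mathcal{H}(C_{i,n} + C_{i,n}^*)\mathcal{H}$ must also be $\partial_j$-free. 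But since $C_{i,n} + C_{i,n}^*$ is polynomial in $\partial_j$, any nonzero value would produce a $\partial_j$-degree-$\geq 2$ contribution in the sandwich, forcing $C_{i,n} + C_{i,n}^* = 0$. This anti-self-adjointness closes the cross case and completes the proof.
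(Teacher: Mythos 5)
Your proof is correct, and it is in fact more thorough than the paper's own argument. The parts you share with the paper are the two ``mod $\mathcal{H}$'' identifications (via Lemma \ref{three} and $\mathcal{H}\mathcal{L}_1=-\mathcal{L}_1^*\mathcal{H}$) and the target-space check that $\partial_i\, d\mathcal{L}_i/dt_{j,n}$ is a self-adjoint pseudodifferential operator in $\partial_i$ of order at most $0$; the paper handles both the diagonal and cross cases by the single manipulation $(\partial_i A \partial_i^{-1}\,\partial_i\mathcal{L}_i-\partial_i\mathcal{L}_i\,\partial_i^{-1}\,\partial_i A)^*$ for $A\in\{A_{i,n},B_{i,n}\}$, which is essentially your $X\partial_i^{-1}M-M\partial_i^{-1}X$ computation. (One small correction there: that operator has order at most $0$, not negative order --- its order-one part cancels, or one can note that a self-adjoint operator of order $\leq 1$ in $\partial_i$ has order $\leq 0$; genuinely negative order would force $dv_0/dt_{2,n}=0$, contradicting the displayed formula for $dv_0/dt_{2,1}$.) What you add, and what the paper's proof does not carry out explicitly, is the verification that the formulas are compatible with the defining relations \eqref{defrel}, equivalently \eqref{defrel3}, of $\mathcal{A}$; since the cross jets $\partial_1^k(w_l)$ and $\partial_2^k(v_m)$ are expressed through those relations, this compatibility is exactly what is needed for the derivation, prescribed on the free generators, to commute with both $\partial_1$ and $\partial_2$. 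Your key auxiliary fact --- the skew-adjointness of $C_{i,n}$ --- appears nowhere in the paper, but your proof of it (sandwiching $C_{i,n}+C_{i,n}^*$ between two copies of $\mathcal{H}$, killing the $\partial_j$-linear terms of $\mathcal{H}B_{i,n}+B_{i,n}^*\mathcal{H}$ by Lemma \ref{four}, and counting $\partial_j$-degrees in a domain) is precisely parallel to the paper's proof in Lemma \ref{five} that $R_{n,m}$ is skew-adjoint, so it sits naturally alongside the existing machinery. In short, your route buys a tighter well-definedness argument at the cost of one extra lemma, and I would recommend keeping that extra step.
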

\begin{proof}
We only need to check that these formulas preserve the form of $\mathcal{L}_1$, $\mathcal{L}_2$ and $\mathcal{H}$. In other words, we need to verify that the image of $\mathcal{H}$ for these derivations is an element of $\mathcal{A}$, and that the images of $\partial_i \mathcal{L}_i, i \in \{1,2\}$ are self-adjoint pseudodifferential operators in $\partial_i$ of order at most $0$. For $\mathcal{H}$ this is obvious since $-A_{i,n}^*(u)$ is an element of $\mathcal{A}$. The fact that it is equal to $[A_{i,n}, \mathcal{H}]$ modulo $\mathcal{H}$ is an immediate consequence of Lemma \ref{three}. By Lemma \ref{two} and Lemma \ref{four}, it follows that both $\partial_i[A_{i,n}, \mathcal{L}_i]$ and $\partial_i[B_{i,n}, \mathcal{L}_i]$ are self-adjoint for all $i \in \{1,2 \}$ and $n \in \mathbb{Z}_{\geq 0}$. 
Indeed,
\begin{equation*}
\begin{split}
(\partial_i[A_{i,n},\mathcal{L}_i])^*&=(\partial_i A_{i,n} \partial_i^{-1} \partial_i \mathcal{L}_i-\partial_i \mathcal{L}_i\partial_i^{-1} \partial_i A_{i,n})^* \\
                      &=-(\partial_i \mathcal{L}_i)^* \partial_i^{-1} (\partial_i A_{i,n})^*+(\partial_i A_{i,n})^* \partial_i^{-1} (\partial_i \mathcal{L}_i)^* \\
                      &=-\partial_i \mathcal{L}_i A_{i,n}+\partial_i A_{i,n} \mathcal{L}_i\\
                      &=\partial_i[A_{i,n},\mathcal{L}_i],
\end{split}
\end{equation*}
and similarly for $\partial_i[B_{i,n},\mathcal{L}_i]$.
Moreover, for all $n \in \mathbb{Z}_{\geq 0}$ and $i \in \{1,2\}$, both $\partial_i[A_{i,n},\mathcal{L}_i]$ and $\partial_i[B_{i,n},\mathcal{L}_i]$ have orders at most $0$, since the $B_{i,n}$'s have negative orders and $[A_{i,n}, \mathcal{L}_i]=-[(\mathcal{L}_i^{2n+1})_-, \mathcal{L}_i]$. Finally, $[A_{2,n}, \mathcal{L}_1] \, \, \text{mod} \, \, \mathcal{H}=[B_{1,n}, \mathcal{L}_1]$ since $\mathcal{H}\mathcal{L}_1=-\mathcal{L}_1^* \mathcal{H}= 0 \, \, \text{mod} \, \, \mathcal{H}$, and similarly after swapping $\partial_1$ with $\partial_2$.
\end{proof}
\begin{remark}
For $i \in \{1,2 \}$, the evolution of $\mathcal{L}_i$ along the flows $d/dt_{i,n}, n \in \mathbb{Z}_{\geq 0}$ is by definition the BKP hierarchy (see [DJKM]). Hence in our construction we have two copies of BKP,  coupled via the operator $\mathcal{H}$.
\end{remark}
The derivations $d/dt_{1,0}$ and $d/dt_{2,0}$ identify with $\partial_1$ and $\partial_2$. We give the evolutions of $u$, $v_0$ and $w_0$ under $d/dt_{1,1}$ and $d/dt_{2,1}$, obtained using Proposition \ref{pro} and equation \eqref{As}:
\begin{equation*}
\begin{split}
\frac{du}{dt_{1,1}}&=\partial_1^3(u)+3\partial_1(v_0 u), \, \, \, \, \,  \frac{du}{dt_{2,1}}=\partial_2^3(u)+3 \partial_2(w_0u), \\
\frac{d v_0}{dt_{1,1}}&= \partial_1^3(v_0)+6v_0 \partial_1(v_0)+3 \partial_1(v_1) , \, \, \, \, \, \frac{dv_0}{dt_{2,1}}= \partial_2^3(v_0)+3 \partial_1(w_0 u), \\
\frac{d w_0}{dt_{1,1}}&= \partial_1^3(w_0)+3 \partial_2(v_0 u), \, \, \, \, \, \frac{d w_0}{dt_{2,1}}=\partial_2^3(w_0)+6w_0 \partial_2(w_0)+3\partial_2(w_1).
\end{split}
\end{equation*}

We are now going to prove that the derivations $d/dt_{i,n}$ are pairwise compatible. In order to do so, we first state an auxiliary lemma.

\begin{lemma} \label{five}
Let $n, m \in \mathbb{Z}_{\geq 0}$ and $i, j \in \{1,2\}$. Then there exists a differential operator $R_{n,m}^{i,j}$ in $\mathcal{A}[\partial_1, \partial_2]$ such that 
\begin{equation}\label{zerocurvature}
\frac{d A_{i,n}}{dt_{j,m}}-\frac{d A_{j,m}}{dt_{i,n}}+[A_{i,n},A_{j,m}]=R_{n,m}^{i,j} \mathcal{H}.
\end{equation}
Moreover, the operator $R_{n,m}^{i,j}$ is skew-adjoint and is identically $0$ when $i=j$. 
\end{lemma}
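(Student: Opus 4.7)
The proof splits into two cases.

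When $i=j$, this is the classical BKP zero-curvature relation. The restriction of $d/dt_{i,n}$ to $\mathcal{L}_i$ is the Lax equation $d\mathcal{L}_i/dt_{i,n}=[A_{i,n},\mathcal{L}_i]$ with $A_{i,n}=(\mathcal{L}_i^{2n+1})_+$, so $\frac{dA_{i,n}}{dt_{i,m}}=([A_{i,m},\mathcal{L}_i^{2n+1}])_+$. The standard Sato--Wilson computation (decompose $\mathcal{L}_i^{2n+1}=A_{i,n}+(\mathcal{L}_i^{2n+1})_-$, use $[\mathcal{L}_i^p,\mathcal{L}_i^q]=0$, and apply Jacobi) yields the identity without remainder, so $R_{n,m}^{i,i}=0$.

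For $i\ne j$, fix $(i,j)=(1,2)$; the other case is symmetric. Proposition \ref{pro} gives $d\mathcal{L}_1/dt_{2,m}=[B_{1,m},\mathcal{L}_1]$, hence $\frac{dA_{1,n}}{dt_{2,m}}=([B_{1,m},\mathcal{L}_1^{2n+1}])_+^{\partial_1}$. Since both $B_{1,m}$ and $(\mathcal{L}_1^{2n+1})_-$ have strictly negative $\partial_1$-order, their commutator has vanishing positive part, so this reduces to $([B_{1,m},A_{1,n}])_+^{\partial_1}$, and symmetrically $\frac{dA_{2,m}}{dt_{1,n}}=([B_{2,n},A_{2,m}])_+^{\partial_2}$. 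Substituting $B_{1,m}=A_{2,m}-C_{1,m}\mathcal{H}$ and $B_{2,n}=A_{1,n}-C_{2,n}\mathcal{H}$, and using the iterated form $\mathcal{H}\mathcal{L}_i^{2n+1}=-(\mathcal{L}_i^*)^{2n+1}\mathcal{H}$ of \eqref{defrel3}, every $[C\mathcal{H},\mathcal{L}_i^{2n+1}]$ or $[C\mathcal{H},A_{i,n}]$ becomes an explicit right multiple of $\mathcal{H}$. This produces an explicit expansion of $W:=\frac{dA_{1,n}}{dt_{2,m}}-\frac{dA_{2,m}}{dt_{1,n}}+[A_{1,n},A_{2,m}]$ as a sum of commutators of differential operators and positive-part projections of right multiples of $\mathcal{H}$.

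The core step is to verify that $W$ is itself a right multiple of $\mathcal{H}$. By Lemma \ref{one}, in the filtration $\mathcal{A}[\partial_2]((\partial_1^{-1}))$ there is a unique decomposition $W=Q\mathcal{H}+T$ with $T\in\mathcal{A}((\partial_1^{-1}))$, and I check $T=0$ by reducing each summand modulo right multiples of $\mathcal{H}$. The key calculational input is the identity $(X\mathcal{H})_+^{\partial_1}=X_+^{\partial_1}\mathcal{H}+(\mathrm{res}_{\partial_1^{-1}}X)\,\partial_2$, which controls how the positive-part projection fails to preserve the right ideal generated by $\mathcal{H}$; the resulting boundary terms cancel in pairs between the two projections, via a BKP-flavored zero-curvature argument applied to the coupled pair $(B_{1,m},A_{1,n})$ together with the congruence $\mathcal{H}A_{1,n}\equiv A_{1,n}^*(u)$ from Lemma \ref{three}. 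Running the symmetric computation in the dual filtration $\mathcal{A}[\partial_1]((\partial_2^{-1}))$ and comparing forces $R:=Q$ to lie in $\mathcal{A}[\partial_1,\partial_2]$.

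For the skew-adjointness of $R$, I take adjoints in the explicit formula produced in the existence step. Adjunction commutes with the evolutionary derivations (the generators of $\mathcal{A}$ are $*$-fixed) and $A_{i,n}^*=-\partial_i A_{i,n}\partial_i^{-1}$ by Lemma \ref{two}. Combining this with Lemma \ref{three} in the form $A_{i,n}^*\mathcal{H}=A_{i,n}^*(u)-\mathcal{H}A_{i,n}$, the adjoint of the expression defining $R$ collapses to $-R$. The principal obstacle is the bookkeeping in the existence argument: tracking the interaction of the two projections $(\cdot)_+^{\partial_1}$ and $(\cdot)_+^{\partial_2}$ with the non-two-sided right ideal generated by $\mathcal{H}$, and organizing the mixed BKP cancellation, requires patience. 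Once the explicit form of $R$ is in hand, both the differential-operator property and the skew-adjointness follow formally from the adjunction symmetries built into Lemmas \ref{two} and \ref{three}.
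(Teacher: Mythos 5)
Your handling of the $i=j$ case and the reduction $\frac{dA_{1,n}}{dt_{2,m}}=([B_{1,m},A_{1,n}])_+^{\partial_1}$ are fine, but the two central claims of the lemma are asserted rather than proved. For existence, you announce that after expanding $W$ the boundary terms produced by the projections ``cancel in pairs \dots via a BKP-flavored zero-curvature argument,'' but that cancellation \emph{is} the content of the lemma and is never carried out; likewise the step forcing the remainder $Q$ of Lemma \ref{one} to be a genuine differential operator is only gestured at. The paper sidesteps all of this bookkeeping: it writes the differential operator $[A_{1,n},A_{2,m}]$ in its unique form $P\partial_1+Q\partial_2+a+R_{n,m}\mathcal{H}$ with $P\in\mathcal{A}[\partial_1]$, $Q\in\mathcal{A}[\partial_2]$, $a\in\mathcal{A}$, $R_{n,m}\in\mathcal{A}[\partial_1,\partial_2]$, then shows $\frac{dA_{1,n}}{dt_{2,m}}=-P\partial_1-a$ (whence $a=0$) and $\frac{dA_{2,m}}{dt_{1,n}}=Q\partial_2$ by observing that each side is congruent modulo right multiples of $\mathcal{H}$ to the same negative-order pseudodifferential operator in $\partial_1$ (resp.\ $\partial_2$) and invoking the uniqueness of Lemma \ref{one}; the identity \eqref{zerocurvature} then drops out immediately, with $R_{n,m}$ a differential operator by construction. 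You should either adopt this route or actually exhibit your cancellation; as written the existence step is a gap.

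The skew-adjointness step is also not sound as described. Taking adjoints of $W=R_{n,m}\mathcal{H}$ yields $W^*=\mathcal{H}R_{n,m}^*$, a \emph{left} multiple of $\mathcal{H}$, so no purely formal adjunction of ``the expression defining $R$'' (which you never actually produce) can collapse to $-R_{n,m}$: one must convert a left multiple back into a right multiple. The paper's argument is of a different nature: it left-multiplies the decomposition by $\mathcal{H}$, applies Lemma \ref{three} to both $A_{1,n}$ and $A_{2,m}$ to show that $\mathcal{H}[A_{1,n},A_{2,m}]+[A_{1,n},A_{2,m}]^*\mathcal{H}$ lies in $\mathcal{A}[\partial_1]\oplus\mathcal{A}[\partial_1]\partial_2\oplus\mathcal{A}[\partial_2]\oplus\mathcal{A}[\partial_2]\partial_1$, and deduces that $\mathcal{H}(R_{n,m}+R_{n,m}^*)\mathcal{H}$ must lie in that same space --- impossible for a nonzero differential operator sandwiched between two copies of $\mathcal{H}$, since $\mathcal{A}$ is a domain. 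Some argument of this sandwich-plus-degree-count type is required; your proposal is missing it.
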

\begin{proof}
When $i=j$ this statement is standard. Indeed,
\begin{equation*}
\begin{split}
0&=[\mathcal{L}_i^{2m+1},\mathcal{L}_i^{2n+1}]_+ \\
 &=[A_{i,m},(\mathcal{L}_i^{2n+1})_-]_+ +[(\mathcal{L}_i^{2m+1})_-, A_{i,n}]_+ + [A_{i,m}, A_{i,n}] \\
 &=[A_{i,m},\mathcal{L}_i^{2n+1}]_+ +[\mathcal{L}_i^{2m+1}, A_{i,n}]_+ - [A_{i,m}, A_{i,n}] \\
 &=\frac{d A_{i,n}}{dt_{i,m}}-\frac{d A_{i,m}}{dt_{i,n}}+[A_{i,n},A_{i,m}].
\end{split}
 \end{equation*}
  It is enough to prove the Lemma when $i=1$ and $j=2$, after which the case $i=2$ and $j=1$ follows by symmetry.
It is clear that there exists a unique decomposition of the differential operator $[A_{1,n},A_{2,m}] \in \mathcal{A}[\partial_1, \partial_2]$ of the form
\begin{equation} \label{dec}
[A_{1,n},A_{2,m}]=P\partial_1+Q \partial_2+a+R_{n,m} \mathcal{H}
\end{equation}
where $P \in \mathcal{A}[\partial_1]$, $Q \in \mathcal{A}[\partial_2]$, $a \in \mathcal{A}$ and $R_{n,m} \in \mathcal{A}[\partial_1, \partial_2]$. By definition, we have
\begin{equation*}
\frac{d \mathcal{L}_1}{dt_{2,m}}=[A_{2,m}, \mathcal{L}_1] \, \, \text{mod} \, \, \mathcal{H}.
\end{equation*}
Since $\mathcal{L}_1 \mathcal{H}=-\mathcal{L}_1^* \mathcal{H}$ we have $\mathcal{H} \mathcal{L}_1^k=0 \, \, \text{mod} \, \, \mathcal{H}$ for any $k \in \mathbb{Z}_{\geq 0}$, hence 
\begin{equation*}
\frac{d \mathcal{L}_1^{2n+1}}{dt_{2,m}}=[A_{2,m}, \mathcal{L}_1^{2n+1}] \, \, \text{mod} \, \, \mathcal{H}.
\end{equation*}
It is straightforward to check that any element of $\partial_1^{-1}\mathcal{A}[\partial_2][[\partial_1^{-1}]]$ is equal modulo $\mathcal{H}$ to a pseudodifferential operator in $\partial_1$ of negative degree. Therefore
\begin{equation*}
\frac{d A_{1,n}}{dt_{2,m}}=[A_{2,m}, A_{1,n}]_+ \, \, \text{mod} \, \, \mathcal{H}.
\end{equation*}
In the decomposition \eqref{dec} of $[A_{1,n}, A_{2,m}]$, the part $Q \partial_2$ is equal modulo $\mathcal{H}$  to a pseudodifferential operator in $\partial_1$ of negative degree, since it is a differential operator in $\partial_2$ with no order zero term. We deduce that 
\begin{equation*}
\frac{d A_{1,n}}{dt_{2,m}}=-P \partial_1 -a.
\end{equation*}
This equation implies that $a=0$, since $A_{1,n}$ does not have a zero order coefficient. Similarly, one can prove that 
\begin{equation*}
\frac{d A_{2,m}}{dt_{1,n}}=Q \partial_2, 
\end{equation*}
from which we conclude that 
\begin{equation*}
\frac{d A_{1,n}}{dt_{2,m}}-\frac{d A_{2,m}}{dt_{1,n}}+[A_{1,n},A_{2,m}]=R_{n,m} \mathcal{H}.
\end{equation*}
We are left to prove that $R_{n,m}$ is skew-adjoint. In order to do so, we use the identities $\mathcal{H}A_{i,k}+A_{i,k}^* \mathcal{H}=A_{i,k}^*(u)$, valid for all $i \in \{1,2\}$ and all $k \in \mathbb{Z}_{\geq 0}$ (see Lemma \ref{three}). We have
\begin{equation*}
\begin{split}
\mathcal{H}[A_{1,n},A_{2,m}]&=(A_{1,n}^*(u)-A_{1,n}^*\mathcal{H})A_{2,m}- (A_{2,m}^*(u)-A_{2,m}^*\mathcal{H})A_{1,n} \\
    &= [A_{1,n}^*,A_{2,m}^*]\mathcal{H}+A_{1,n}^*(u)A_{2,m}+A_{2,m}^*A_{1,n}^*(u)-A_{2,m}^*(u)A_{1,n}-A_{1,n}^*A_{2,m}^*(u) \\
     &=-[A_{1,n},A_{2,m}]^*\mathcal{H}+A_{1,n}^*(u)A_{2,m}+A_{2,m}^*A_{1,n}^*(u)-A_{2,m}^*(u)A_{1,n}-A_{1,n}^*A_{2,m}^*(u).
\end{split}
\end{equation*}
Combining this equation with \eqref{dec}, we get 
\begin{equation*}
\begin{split}
&\mathcal{H}(P\partial_1+ Q \partial_2)+\mathcal{H}R_{n,m} \mathcal{H}-(\partial_1 P^*+\partial_2 Q^*) \mathcal{H}+\mathcal{H}R_{n,m}^* \mathcal{H}= \\
&A_{1,n}^*(u)A_{2,m}+A_{2,m}^*A_{1,n}^*(u)-A_{2,m}^*(u)A_{1,n}-A_{1,n}^*A_{2,m}^*(u).
\end{split}
\end{equation*}
In particular, $\mathcal{H}(R_{n,m}+R_{n,m}^*)\mathcal{H}$ must be in the space $\mathcal{A}[\partial_1] \oplus \mathcal{A}[\partial_1] \partial_2 \oplus \mathcal{A}[\partial_2] \oplus \mathcal{A}[\partial_2] \partial_1$. This can only be if $R_{n,m}+R_{n,m}^*=0$, since $\mathcal{A}$ is a domain.
\end{proof}

\begin{theorem}
The evolutionary derivations $d/dt_{i,n}, \, i \in \{1,2\}, \, n \in \mathbb{Z}_{\geq 0}$ of the differential algebra $(\mathcal{A}, \partial_1, \partial_2)$ pairwise commute.
\end{theorem}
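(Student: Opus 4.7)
My plan is to exploit that the derivations $d/dt_{i,n}$ are evolutionary by construction, so any commutator $[d/dt_{i,n}, d/dt_{j,m}]$ is again an evolutionary derivation of $(\mathcal{A}, \partial_1, \partial_2)$. Such a derivation vanishes on $\mathcal{A}$ iff it vanishes on the generators; equivalently, it suffices to check the three values on $\mathcal{L}_1$, $\mathcal{L}_2$ and $u$. I will handle $\mathcal{L}_1$ (and symmetrically $\mathcal{L}_2$) via a zero-curvature computation that reduces to the uniqueness part of Lemma \ref{one}, then deduce commutation on $u$ from commutation on $v_0$ and $w_0$ via the constraints \eqref{rela}.

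For $\mathcal{L}_1$ the two flows take the uniform Lax shape $d\mathcal{L}_1/dt_{j,m} = [M_{j,m}, \mathcal{L}_1]$ with $M_{1,m} := A_{1,m}$ and $M_{2,m} := B_{1,m}$. A standard Jacobi manipulation gives
\begin{equation*}
\Bigl[\tfrac{d}{dt_{i,n}}, \tfrac{d}{dt_{j,m}}\Bigr]\mathcal{L}_1 = [Z, \mathcal{L}_1], \qquad Z \seteq \frac{dM_{j,m}}{dt_{i,n}} - \frac{dM_{i,n}}{dt_{j,m}} + [M_{j,m}, M_{i,n}],
\end{equation*}
and the main task is to show $Z = T \mathcal{H}$ for some $T \in \mathcal{A}[\partial_2](\!(\partial_1^{-1})\!)$. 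The case $i = j = 1$ is Lemma \ref{five} with $R^{1,1}_{n,m} = 0$, which actually yields $Z = 0$. For $(i,j) = (1,2)$ or $(2,2)$ I substitute $B_{1,k} = A_{2,k} - C_{1,k}\mathcal{H}$: the $A_{2,\cdot}$-parts are controlled by Lemma \ref{five}, producing the term $\pm R^{1,2}_{n,m}\mathcal{H}$ (or $0$), while each bracket $[A_{i,n}, C_{1,k}\mathcal{H}]$ is rewritten via Lemma \ref{three} in the form $\mathcal{H} A_{i,n} = A_{i,n}^*(u) - A_{i,n}^*\mathcal{H}$ as
\begin{equation*}
[A_{i,n}, C_{1,k}\mathcal{H}] = (A_{i,n} C_{1,k} + C_{1,k} A_{i,n}^*)\mathcal{H} - C_{1,k} A_{i,n}^*(u),
\end{equation*}
whose non-$\mathcal{H}$ piece $-C_{1,k} A_{i,n}^*(u)$ cancels exactly against the Leibniz term $-C_{1,k}\,d\mathcal{H}/dt_{i,n} = C_{1,k}A_{i,n}^*(u)$ coming from differentiating $C_{1,k}\mathcal{H}$. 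What remains is a single right multiple $T\mathcal{H}$ of $\mathcal{H}$. Since $Z$ lies in $\mathcal{A}(\!(\partial_1^{-1})\!)$ (pure pseudodifferential in $\partial_1$) while $T \in \mathcal{A}[\partial_2](\!(\partial_1^{-1})\!)$, the uniqueness part of Lemma \ref{one} forces $Z = T = 0$, so $[d/dt_{i,n}, d/dt_{j,m}]\mathcal{L}_1 = 0$. The analogous argument after swapping $\partial_1 \leftrightarrow \partial_2$ handles $\mathcal{L}_2$.

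For $u$, commutation on $\mathcal{L}_1$ gives in particular commutation on its coefficient $v_0$; using \eqref{rela} and that the flows commute with $\partial_1, \partial_2$,
\begin{equation*}
\partial_1\bigl([d/dt_{i,n}, d/dt_{j,m}]u\bigr) = [d/dt_{i,n}, d/dt_{j,m}](\partial_2 v_0) = \partial_2\bigl([d/dt_{i,n}, d/dt_{j,m}]v_0\bigr) = 0,
\end{equation*}
and symmetrically via $w_0$ one obtains $\partial_2$-annihilation. Hence the commutator applied to $u$ lies in the intersection of the $\partial_1$- and $\partial_2$-constants of $\mathcal{A}$, which is $\mathbb{C}$. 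On the other hand each flow $d/dt_{k,l}(u) = -A_{k,l}^*(u)$ lies in the augmentation ideal generated by $u$, the $v_m$, the $w_l$ and their jets, this ideal is preserved by every flow, so any iterated commutator applied to $u$ stays in it, and its intersection with $\mathbb{C}$ is zero. The main obstacle will be the bookkeeping in the Lax computation of $Z$: matching the skew-adjoint residue $R^{i,j}_{n,m}\mathcal{H}$ from Lemma \ref{five}, the Leibniz residues from differentiating $C_{1,\cdot}\mathcal{H}$, and the Lemma \ref{three} residues from $\mathcal{H}$ crossing differential operators into a single right multiple of $\mathcal{H}$; the identity $d\mathcal{H}/dt_{i,n} = -A_{i,n}^*(u)$ is precisely the correction that makes the cancellations work.
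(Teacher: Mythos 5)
Your proposal is correct, and it splits into a part that coincides with the paper and a part that is genuinely different. For $\mathcal{L}_1$ and $\mathcal{L}_2$ your route is essentially the paper's: both reduce the commutator to a bracket $[Z,\mathcal{L}_1]$ with $Z$ a right multiple of $\mathcal{H}$ via the zero-curvature Lemma \ref{five}, and both kill $Z$ by observing that it lies in $\mathcal{A}(\!(\partial_1^{-1})\!)$ and invoking the uniqueness part of Lemma \ref{one}; the only cosmetic difference is that you substitute $B_{1,k}=A_{2,k}-C_{1,k}\mathcal{H}$ and track the residues explicitly (your cancellation of $C_{1,k}A_{i,n}^*(u)$ against the Leibniz term $-C_{1,k}\,d\mathcal{H}/dt_{i,n}$ checks out), whereas the paper carries correction terms $P_{i,n}\mathcal{H}$ and computes modulo $\mathcal{H}$ throughout. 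Where you genuinely diverge is the generator $u$: the paper verifies $\frac{d^2\mathcal{H}}{dt_{i,n}dt_{j,m}}=\frac{d^2\mathcal{H}}{dt_{j,m}dt_{i,n}}$ by a direct computation that terminates in $-\mathcal{H}(R^{i,j}_{n,m}+(R^{i,j}_{n,m})^*)\mathcal{H}=0$, which is exactly why Lemma \ref{five} establishes the skew-adjointness of $R^{i,j}_{n,m}$; you instead deduce commutation on $u$ from commutation on $v_0$ and $w_0$ through the constraints $\partial_2(v_0)=\partial_1(u)$, $\partial_1(w_0)=\partial_2(u)$ of \eqref{rela}, landing in the field of $\partial_1$-constants $\mathbb{C}$ and then excluding a nonzero constant by the augmentation-ideal argument. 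Your route buys a real economy — the skew-adjointness half of Lemma \ref{five} becomes unnecessary for the theorem — at the cost of leaning on the specific relations \eqref{defrel} tying $u$ to $v_0,w_0$ and on the triviality of the constants, whereas the paper's computation is self-contained at the level of the operators $\mathcal{H}$, $A_{i,n}$ and would survive in settings where the field of constants is larger. Both arguments are sound.
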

\begin{proof}
Since the commutator of two evolutionary derivations is an evolutionary derivation, it is enough to show that these derivations pairwise commute on the generators of $\mathcal{A}$. We first check that $\frac{d^2 \mathcal{H}}{dt_{i,n} dt_{j,m}}=\frac{d^2 \mathcal{H}}{dt_{j,m} dt_{i,n}}$ for all $ i,j \in \{1,2\}, \, n,m \in \mathbb{Z}_{\geq 0}$. By definition of $d/dt_{i,n}$ and Lemma \ref{three} we have 
\begin{equation*}
\frac{d \mathcal{H}}{dt_{i,n}}=-A_{i,n}^*(u)=[A_{i,n}, \mathcal{H}]-(A_{i,n}+A_{i,n}^*)\mathcal{H}.
\end{equation*}
Applying the derivation $d/dt_{j,m}$ to this equation we get
\begin{equation*}
\begin{split}
\frac{d^2 \mathcal{H}}{dt_{j,m}dt_{i,n}}&=[\frac{dA_{i,n}}{dt_{j,m}}, \mathcal{H}]-(\frac{dA_{i,n}}{dt_{j,m}}+\frac{dA^*_{i,n}}{dt_{j,m}})\mathcal{H}\\
&+[A_{i,n}, \frac{d\mathcal{H}}{dt_{j,m}}]-(A_{i,n}+A_{i,n}^*)\frac{d\mathcal{H}}{dt_{j,m}}\\
&=[\frac{dA_{i,n}}{dt_{j,m}}, \mathcal{H}]+[A_{i,n},[A_{j,m}, \mathcal{H} ]]+ (A_{j,m}+{A^*_{j,m}})\mathcal{H}A_{i,n} +(A_{i,n}+A^*_{i,n}) \mathcal{H} A_{j,m} \\
&+((A_{i,n}+A^*_{i,n})(A_{j,m}+A^*_{j,m})-A_{i,n}(A_{j,m}+A^*_{j,m})-(A_{i,n}+A^*_{i,n})A_{j,m}-\frac{dA_{i,n}}{dt_{j,m}}-\frac{dA^*_{i,n}}{dt_{j,m}}) \mathcal{H}.
\end{split}
\end{equation*}
After a straightforward computation we see that
\begin{equation*}
\frac{d^2 \mathcal{H}}{dt_{j,m}dt_{i,n}}-\frac{d^2 \mathcal{H}}{dt_{i,n}dt_{j,m}}=[X,\mathcal{H}]-(X+X^*)\mathcal{H},
\end{equation*}
where $X=\frac{d A_{i,n}}{dt_{j,m}}-\frac{d A_{j,m}}{dt_{i,n}}+[A_{i,n},A_{j,m}]$. Using Lemma \ref{five} we deduce that
\begin{equation*}
\frac{d^2 \mathcal{H}}{dt_{j,m}dt_{i,n}}-\frac{d^2 \mathcal{H}}{dt_{i,n}dt_{j,m}}=-\mathcal{H} R_{n,m}^{i,j}\mathcal{H}-\mathcal{H}(R_{n,m}^{i,j})^* \mathcal{H}=0,
\end{equation*}
since $R_{n,m}^{i,j}$ is skew-adjoint. We now prove that $\frac{d^2 \mathcal{L}_1}{dt_{i,n} dt_{j,m}}=\frac{d^2 \mathcal{L}_1}{dt_{j,m} dt_{i,n}}$ for all $ i,j \in \{1,2\}, \, n,m \in \mathbb{Z}_{\geq 0}$.  By definition of $d/dt_{i,n}$ there exists an element $P_{i,n} \in \mathcal{A}[\partial_2]((\partial_1^{-1}))$ such that
\begin{equation*}
\frac{d \mathcal{L}_1}{dt_{i,n}}=[A_{i,n}, \mathcal{L}_1]+P_{i,n} \mathcal{H}.
\end{equation*}
 Note that $P_{1,n}=0$ for all $n \in \mathbb{Z}_{\geq 0}$. The following equalities hold modulo left multiplication by $\mathcal{H}$ in $\mathcal{A}[\partial_2]((\partial_1^{-1}))$
\begin{equation*}
\begin{split}
\frac{d^2 \mathcal{L}_1}{dt_{i,n}dt_{j,m}}&=[\frac{d A_{i,n}}{dt_{j,m}}, \mathcal{L}_1]+[A_{i,n},[A_{j,m}, \mathcal{L}_1]]-P_{j,m} \mathcal{H} A_{i,n}+P_{i,n} \frac{d \mathcal{H}}{dt_{j,m}} \, \, \, \text{mod} \, \, \mathcal{H} \\
&= [\frac{d A_{i,n}}{dt_{j,m}}, \mathcal{L}_1]+[A_{i,n},[A_{j,m}, \mathcal{L}_1]]-P_{j,m} \mathcal{H} A_{i,n}-P_{i,n} \mathcal{H} A_{j,m} \, \, \, \text{mod} \, \, \mathcal{H}.
\end{split}
\end{equation*}
Hence, by Lemma \ref{five} we have
\begin{equation*}
\begin{split}
\frac{d^2 \mathcal{L}_1}{dt_{i,n}dt_{j,m}}-\frac{d^2 \mathcal{L}_1}{dt_{j,m}dt_{i,n}}&=[\frac{d A_{i,n}}{dt_{j,m}}-\frac{d A_{j,m}}{dt_{i,n}}+[A_{i,n},A_{j,m}], \mathcal{L}_1]] \, \, \, \text{mod} \mathcal{H} \\
&=[R_{n,m}^{i,j} \mathcal{H}, \mathcal{L}_1] \, \, \, \text{mod} \, \mathcal{H} \\
&= R_{n,m}^{i,j} \mathcal{H} \mathcal{L}_1 \, \, \, \text{mod} \, \mathcal{H} \\
&= -R_{n,m}^{i,j} \mathcal{L}_1^* \mathcal{H} \, \, \, \text{mod} \,  \mathcal{H} \\
&=0 \, \, \, \text{mod} \, \mathcal{H}.
\end{split}
\end{equation*}
The LHS is a multiple of $\mathcal{H}$ in $\mathcal{A}[\partial_2]((\partial_1^{-1}))$. But it is also in $\mathcal{A}((\partial_1^{-1}))$. Therefore it is identically $0$ by Lemma \ref{one}. The proof that $\frac{d^2 \mathcal{L}_2}{dt_{i,n} dt_{j,m}}=\frac{d^2 \mathcal{L}_2}{dt_{j,m} dt_{i,n}}$ for all $ i,j \in \{1,2\}, \, n,m \in \mathbb{Z}_{\geq 0}$ is similar.
\end{proof}

\section{Real reduction}
We are now going to reduce the hierarchy constructed in the previous section, under the following involution $\tau$ of the $\mathbb{R}$-algebra $\mathcal{A}$:
\begin{equation*}
\begin{split}
z & \mapsto \bar{z}, \\
\partial_1^n(v_i)& \mapsto \partial_2^n(w_i), \\
\partial_2^n(w_j)& \mapsto \partial_1^n(v_j) , \, \, \, \, \, \, \, \, \text{for all} \, i,j,n,p,q \in \mathbb{Z}_{\geq 0}, \, \, z \in \mathbb{C}.\\
\partial_1^p \partial_2^q(u)& \mapsto \partial_1^q \partial_2^p(u).
\end{split}
\end{equation*}
It is immediate that $\tau(\partial_1(\tau(\mathcal{L}_2)))=\partial_2(\mathcal{L}_2)$. The fact that $\tau(\partial_1(\tau(\mathcal{L}_1)))=\partial_2(\mathcal{L}_1)$ follows from the relation \ref{defrel}. Hence the two derivations of $\mathcal{A}$, $\partial_2$ and $\tau \partial_1 \tau$, coincide. We can then extend the involution $\tau$ to the algebra of pseudodifferential operators over $\mathcal{A}$ by the formula $\tau(\mathcal{P})= \tau \mathcal{P} \tau$, which corresponds to sending $\partial_1$ to $\partial_2$ and vice versa. In particular, for all $a \in \mathcal{A}$ and $i \in \{1,2\}$, we have
\begin{equation*}
\tau(\partial_i(a))=\tau(\partial_i)(\tau(a)).
\end{equation*}
Moreover, $u$ is an invariant element of $\mathcal{A}$ for the involution $\tau$.

\begin{proposition} \label{propo}
The (pairwise commuting evolutionary) derivations $d/dt_n:=d/dt_{1,n}+d/dt_{2,n}$ of $\mathcal{A}$ are invariant under the involution $\tau$. In other words, for  all $a \in \mathcal{A}$ and $n \in \mathbb{Z}_{\geq 0}$ one has $\tau({\frac{da}{dt_n}})=\frac{d \tau(a)}{dt_n}$.
\end{proposition}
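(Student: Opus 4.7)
The plan is to show the stronger statement that $\tau \circ (d/dt_{1,n}) \circ \tau^{-1} = d/dt_{2,n}$ (and symmetrically with $1,2$ exchanged), from which the proposition follows immediately by adding the two identities. Since each $d/dt_{i,n}$ is an evolutionary derivation, and since conjugating an evolutionary derivation by $\tau$ yields another evolutionary derivation (as $\tau$ merely swaps the commuting derivations $\partial_1$ and $\partial_2$), it suffices to check the identity on the generating operators $\mathcal{L}_1$, $\mathcal{L}_2$ and $\mathcal{H}$.

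First I would record how $\tau$ interacts with the constructions of Section 3. From the definitions one reads off $\tau(\mathcal{L}_1)=\mathcal{L}_2$, $\tau(\mathcal{L}_2)=\mathcal{L}_1$, $\tau(\mathcal{H})=\mathcal{H}$, and from $\tau(\partial_i)=\partial_{3-i}$ together with $v^*=v$ one verifies that $\tau$ commutes with the adjunction $P \mapsto P^*$ and interchanges the two truncations $(\cdot)_+$ (one for $\partial_1^{-1}$-series, one for $\partial_2^{-1}$-series). Consequently $\tau(A_{1,n})=A_{2,n}$ and $\tau(A_{2,n})=A_{1,n}$. Applying $\tau$ to the decomposition $A_{1,n}=B_{2,n}+C_{2,n}\mathcal{H}$ yields $A_{2,n}=\tau(B_{2,n})+\tau(C_{2,n})\mathcal{H}$ with $\tau(B_{2,n})\in\mathcal{A}((\partial_1^{-1}))$, so the uniqueness clause of Lemma~\ref{one} forces $\tau(B_{2,n})=B_{1,n}$ and $\tau(C_{2,n})=C_{1,n}$. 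Finally, for any differential operator $P$ one has $\tau(P(u))=\tau(P)(u)$ because $\tau(u)=u$ and $\tau$ commutes with $\partial_1,\partial_2$ up to swap.

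With these preparatory identities in hand, checking $\tau\circ(d/dt_{1,n})\circ\tau^{-1}=d/dt_{2,n}$ on the three generators is a direct computation using Proposition~\ref{pro}:
\begin{equation*}
\begin{split}
\tau\bigl(\tfrac{d}{dt_{1,n}}\tau(\mathcal{L}_1)\bigr)&=\tau[B_{2,n},\mathcal{L}_2]=[B_{1,n},\mathcal{L}_1]=\tfrac{d\mathcal{L}_1}{dt_{2,n}},\\
\tau\bigl(\tfrac{d}{dt_{1,n}}\tau(\mathcal{L}_2)\bigr)&=\tau[A_{1,n},\mathcal{L}_1]=[A_{2,n},\mathcal{L}_2]=\tfrac{d\mathcal{L}_2}{dt_{2,n}},\\
\tau\bigl(\tfrac{d}{dt_{1,n}}\tau(\mathcal{H})\bigr)&=-\tau(A_{1,n}^*(u))=-A_{2,n}^*(u)=\tfrac{d\mathcal{H}}{dt_{2,n}}.
\end{split}
\end{equation*}
The analogous identity with the roles of $1$ and $2$ exchanged holds by the same argument. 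Summing the two gives $\tau\circ(d/dt_{1,n}+d/dt_{2,n})\circ\tau^{-1}=d/dt_{1,n}+d/dt_{2,n}$, which is the claim. The only mildly nontrivial step is the compatibility of $\tau$ with the splitting in Lemma~\ref{one}; everything else is bookkeeping once that is in place.
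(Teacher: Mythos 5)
Your proof is correct and follows essentially the same route as the paper: reduce to the generators $\mathcal{L}_1$, $\mathcal{L}_2$, $\mathcal{H}$ and use the identities $\tau(\mathcal{L}_1)=\mathcal{L}_2$, $\tau(A_{1,n})=A_{2,n}$, $\tau(B_{2,n})=B_{1,n}$, $\tau(\mathcal{H})=\mathcal{H}$. You merely make explicit the stronger intertwining $\tau\circ(d/dt_{1,n})\circ\tau=d/dt_{2,n}$ and the uniqueness argument from Lemma~\ref{one} behind $\tau(B_{2,n})=B_{1,n}$, both of which the paper leaves implicit.
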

\begin{proof}
One only needs to check that this property holds on the generators of $\mathcal{A}$, which are the coefficients of $\mathcal{L}_1$, $\mathcal{L}_2$ and $\mathcal{H}$. This follows from the equalities
\begin{equation*}
\begin{split}
A_{2,n}&=\tau(A_{1,n}), \, \, \, \, \, B_{2,n}=\tau(B_{1,n}), \\
\mathcal{L}_2&= \tau(\mathcal{L}_1), \, \, \, \, \, \mathcal{H}=\tau(\mathcal{H}),
\end{split}
 \end{equation*} 
 valid for all $n \in \mathbb{Z}_{\geq 0}$.
\end{proof}
Explicitly, the derivations $d/dt_n$ are defined by the formulas
\begin{equation*}
\begin{split}
\frac{d \mathcal{H}}{dt_n}&=[A_{1,n} +\tau(A_{1,n}), \mathcal{H}] \, \, \text{mod} \, \, \mathcal{H}=-A_{1,n}^*(u)-\tau(A_{1,n})^{*}(u)\\
\frac{d \mathcal{L}_1}{dt_n}&=[A_{1,n} +\tau(A_{1,n}), \mathcal{L}_1] \, \, \text{mod} \, \, \mathcal{H}=[A_{1,n} +B_{1,n}, \mathcal{L}_1].
\end{split}
\end{equation*}
We have $d\mathcal{L}_2/dt_n=\tau(d \mathcal{L}_1/dt_n)$ for all $n \in \mathbb{Z}_{\geq 0}$, by Proposition \ref{propo}.
The evolution of $u$ and $v_0$ for the first nontrivial derivation $d/dt_1$ in this reduced hierarchy is given by
\begin{equation} \label{NVag}
\begin{split}
\frac{du}{dt_1}&=(\partial_1^3+\tau(\partial_1)^3+3 \partial_1 v_0+3 \tau(\partial_1) \tau(v_0))(u), 
\\
\frac{dv_0}{dt_1}&=(\partial_1^3+\tau(\partial_1)^3)(v_0)+6 v_0 \partial_1(v_0)+3 \partial_1(u \tau(v_0))+3 \partial_1(v_1).
\end{split}
\end{equation}
Note that the first coeffient in the relation \eqref{rela} rewrites as $\tau(\partial_1)(v_0)=\partial_1(u)$. Hence, one retrieves the Novikov-Veselov equation \eqref{NVeq} from \eqref{NVag} after letting $v=3 v_0$. One can interpret u and $v_n, n \in \mathbb{Z}_{\geq 0}$ as being complex-valued functions of two real variables $x$ and $y$, $\partial_1$ the derivation $1/2(\partial_x+i \partial_y)$ and the involution as being the complex conjugation. From the condition $u=\tau(u)$ it follows that $u$ is a real-valued function.

\end{document}